\newcommand{\congest}{${\mathsf{CONGEST}}$}
\newcommand{\dist}{\mbox{\rm dist}}
\theoremstyle{plain}
\newtheorem{thm}{Theorem}[section]
\newcommand{\BTHM}{\begin{thm}} \newcommand{\ETHM}{\end{thm}}
\newtheorem{cor}[thm]{Corollary}
\newcommand{\BCR}{\begin{cor}} \newcommand{\ECR}{\end{cor}}
\newtheorem{lem}[thm]{Lemma}
\newcommand{\BL}{\begin{lem}}   \newcommand{\EL}{\end{lem}}
\newtheorem{clm}[thm]{Claim}
\newcommand{\BCM}{\begin{clm}}   \newcommand{\ECM}{\end{clm}}
\newtheorem{prop}[thm]{Proposition}
\newcommand{\BP}{\begin{prop}}   \newcommand{\EP}{\end{prop}}
\newtheorem{assm}[thm]{Assumption}
\newcommand{\BASM}{\begin{assm}}   \newcommand{\EASM}{\end{assm}}
\theoremstyle{definition}
\newtheorem{defn}{Definition}[section]
\newcommand{\BD}{\begin{defn}}   \newcommand{\ED}{\end{defn}}
\newtheorem{con}[thm]{Conjecture}
\newcommand{\BCONJ}{\begin{con}}   \newcommand{\ECONJ}{\end{con}}
\theoremstyle{definition}
\newtheorem{problem}[thm]{Problem}
\newcommand{\BPR}{\begin{problem}}   \newcommand{\EPR}{\end{problem}}
\newenvironment{rem}{\noindent{\bf Remark:~~}}{}
\newcommand{\BREM}{\begin{rem}} \newcommand{\EREM}{\end{rem}}
\newenvironment{discussion}{\noindent{\bf Discussion:~~\\}}{}
\newcommand{\BDIS}{\begin{discussion}} \newcommand{\EDIS}{\end{discussion}}
\newtheorem{obs}{Observation}[section]
\numberwithin{equation}{section}
\def\blackslug
\def\qed{\quad\blackslug\lower 8.5pt\null\par}
\newcommand{\dilation}{\mbox{\tt d}}
\newcommand{\congestion}{\mbox{\tt c}}
\newtheorem{exmp}[thm]{Example}
\newtheorem{fact}[thm]{Fact}
\newcommand{\BEX}{\begin{exmp}} \newcommand{\EEX}{\end{exmp}}
\newcommand{\BF}{\begin{fact}}   \newcommand{\EF}{\end{fact}}
\newcommand{\Bcr}{\begin{techcorr}}
\newcommand{\Ecr}{\end{techcorr}}
\newcommand{\BDS}{\begin{description}}
\newcommand{\EDS}{\end{description}}
\newcommand{\BE}{\begin{enumerate}}
\newcommand{\EE}{\end{enumerate}}
\newcommand{\BI}{\begin{itemize}}
\newcommand{\EI}{\end{itemize}}
\renewenvironment{proof}{\noindent{\bf Proof:~~}}{\qed}
\newcommand{\BPF}{\begin{proof}}
\newcommand{\EPF}{\end{proof}}
\newcommand{\BB}{\begin{enumerate}}
\newcommand{\EB}{\end{enumerate}}
\title{Low-Congestion Shortcuts in Constant Diameter Graphs}
\author{
 Shimon Kogan\\
  \small Weizmann Institute\\
  \small shimon.kogan@weizmann.ac.il
\and
Merav Parter\\
        \small Weizmann Institute \\
        \small merav.parter@weizmann.ac.il\thanks{Supported by the European Research Council (ERC) No. 949083, and by the Israeli Science Foundation (ISF) No. 2084/18.}
}
\begin{document}
\maketitle

\begin{abstract}
Low congestion shortcuts, introduced by Ghaffari and Haeupler (SODA 2016), provide a unified framework for global optimization problems in the \congest\ model of distributed computing. Roughly speaking, for a given graph $G$ and a collection of vertex-disjoint connected subsets $S_1,\ldots, S_\ell \subseteq V(G)$, $(\congestion,\dilation)$ low-congestion shortcuts augment each subgraph $G[S_i]$ with a subgraph $H_i \subseteq G$ such that: (i) each edge appears on at most $\congestion$ subgraphs (congestion bound), and (ii) the diameter of each subgraph $G[S_i] \cup H_i$ is bounded by $\dilation$ (dilation bound). It is desirable to compute shortcuts of small congestion and dilation as these quantities capture the round complexity of many 
global optimization problems in the $\mathsf{CONGEST}$ model. For $n$-vertex graphs with constant diameter $D=O(1)$, Elkin (STOC 2004) presented an (implicit) shortcuts lower bound with\footnote{As usual, $\tilde{O}()$ and $\tilde{\Omega}()$ hide poly-logarithmic factors.} $\congestion+\dilation=\widetilde{\Omega}(n^{(D-2)/(2D-2)})$. A nearly matching upper bound, however, was only recently obtained for $D \in \{3,4\}$ by Kitamura et al. (DISC 2019). 

In this work, we resolve the long-standing complexity gap of shortcuts in constant diameter graphs, originally posed by Lotker et al. (PODC 2001). We present new shortcut constructions which match, up to poly-logarithmic terms, the lower bounds of Elkin. As a result, we provide improved and existentially optimal algorithms for several network optimization tasks in constant diameter graphs, including MST, $(1+\epsilon)$-approximate minimum cuts and more. 
\end{abstract}

\maketitle

\section{Introduction}
Low congestion shortcut is a combinatorial graph structure introduced by Ghaffari and Haeupler \cite{GhaffariH16} in the context of distributed network optimization. Specifically, low congestion shortcuts provide a unified framework for obtaining existentially nearly-tight algorithms for a large collection of global graph problems in the $\mathsf{CONGEST}$ model of distributed computing \cite{Peleg:2000}. In this model, the network is abstracted as an $n$-node graph $G=(V, E)$, with one processor on each network node. Initially, these processors do not know the graph, and they solve the given graph problem via communicating with their neighbors in a synchronous manner. Per round, each processor can send one $O(\log n)$-bit message to each of its neighboring processors.  
Low congestion shortcuts are formally defined as follows:
\\
\begin{mdframed}[hidealllines=true,backgroundcolor=gray!25]
\vspace{-5pt}
\BD [Ghaffari and Haeupler \cite{GhaffariH16}]
Given a graph $G = (V, E)$ and a collection $\mathcal{S}=\{S_1,\ldots, S_\ell\}$
of vertex-disjoint and connected subsets of $V$, a $(\dilation, \congestion)$-\emph{shortcut} of $G$ and $\mathcal{S}$ is defined by a set of subgraphs $\mathcal{H} = \{H_1, H_2, \ldots, H_\ell\}$ of $G$ such that: 
\begin{enumerate}
\item For each $i$, the diameter of $G[S_i] \cup H_i$ is at most $\dilation$.
\item Each edge $e \in E$ appears on at most $\congestion$ subgraphs $\{G[S_i] \cup H_i, i\in \{1,\ldots, \ell\}\}$.
\end{enumerate}
\ED
\end{mdframed}
In other words, the efficiency of the shortcuts is characterized by two parameters: the \emph{dilation} measured by the maximum diameter $\dilation$ over all subgraphs, and the \emph{congestion} measured by the largest number $\congestion$ of augmented subgraphs that use a given edge. The summation of the dilation and congestion is usually referred to as the \emph{quality} of the shortcuts. In their influential work, Ghaffari and Haeupler \cite{GhaffariH16} observed that the round complexity of the Minimum Spanning Tree (MST) and the approximate minimum-cut problems in the $\mathsf{CONGEST}$ model can be bounded (up to poly-logarithmic terms) by the quality of the shortcuts. This in particular implies that any round complexity lower bound for the MST problem implies also a lower bound on the quality of the shortcuts. Low-congestion shortcuts have been proven useful for a wide collection of tasks, including the computation of approximate shortest-paths \cite{HaeuplerL18}, DFS trees \cite{GhaffariP17}, graph diameter \cite{LiP19}, and the approximation of minimum weight two-edge connected subgraphs \cite{dory2019improved}. 

For any $n$-vertex graph of (unweighted) diameter $D$, Ghaffari and Haeupler \cite{GhaffariH16} observed the existence of shortcuts with quality $O(D+\sqrt{n})$. This bound is known to be nearly tight by the MST lower bound results of Elkin \cite{Elkin04} and Das-Sarma et al. \cite{sarma2012distributed}. As shortcuts fully capture the round complexity of many graph problems, there has been a great effort in characterizing graph families for which shortcuts with a considerably improved quality can be obtained. A representative list of these families includes: planar graphs \cite{GhaffariH16,haeupler2016LCSwoEmbed}, graphs with excluded minors \cite{HaeuplerLZ18,GhaffariHMinorArxiv}, highly connected graphs \cite{ChuzhoyPT20}, expander graphs \cite{GhaffariKS17}, graphs with bounded chordality and clique-width \cite{KitamuraKOI19}. In all of these works, the improved shortcuts can also be computed in a round complexity that matches their quality, thus providing improved algorithms for MST and other optimization problems, for these graph families.  

One notable graph family that has attracted a significant amount of attention for more than two decades, concerns the family of \emph{constant diameter} graphs. It has been widely noted that many of the real-world networks have a very small diameter, independent of the number of network's participants. In the context of social networks, such as the Facebook, this phenomenon is usually explained by the ``six degrees of separations". The diameter of the world-wide web, as another example, is bounded by $19$ while having billions of nodes (pages) \cite{albert1999diameter}. 
This apparent ubiquity of constant diameter networks motivated the design of improved algorithms for this graph family. The canonical problem in this regard is MST.

%
%
For graphs of diameter $D=1$ (i.e., the complete graphs), Lotker et al. \cite{lotker2003mst} presented an $O(\log\log n)$-round MST algorithm. Following a sequence of improvements \cite{HegemanPPSS15,GhaffariP16}, the state-of-the-art round complexity of the problem is $O(1)$ rounds \cite{Jurdzinski018,NowickiMST20}. For graphs with diameter $D=2$, Lotker et al. \cite{LotkerPP01,LotkerPP06} presented MST algorithms with $O(\log n)$ rounds, and for graphs with $D \in \{3,4\}$, they gave a lower bound of $\widetilde{\Omega}(n^{1/4})$ and $\widetilde{\Omega}(n^{1/3})$ rounds, respectively \footnote{Since the MST complexity for $D \geq 3$ is already polynomial in $n$, poly-logarithmic terms are hidden.}. This in turn also implies a lower bound of $\widetilde{\Omega}(n^{1/4}), \widetilde{\Omega}(n^{1/3})$ on the quality of the shortcuts in graphs with diameter $3,4$.
In a seminal paper, Elkin \cite{Elkin04} extended the MST lower-bound of Lotker et al. for any constant diameter graphs. In particular, their result, stated in the shortcut terminology, implies the existence of an $n$-vertex $D$-diameter graph $G^*$, and specific vertex disjoint connected subsets $\mathcal{S}=\{S_1,\ldots, S_\ell\}$, such that any $(\congestion,\dilation)$ shortcuts for $\mathcal{S}$ must satisfy that  $\congestion+\dilation=\widetilde{\Omega}(n^{(D-2)/(2D-2)})$.  For the special case of $D \in\{3,4\}$, recently Kitamura et al. \cite{KitamuraKOI19} presented an upper bound construction of shortcuts, which matches the lower bound of Lotker et al. \cite{LotkerPP01,LotkerPP06}. For graphs with diameter of $D \geq 5$, shortcuts with quality $o(\sqrt{n})$ were not known to this date.

\paragraph{Our Results:} In this paper, we resolve the long-standing open problem regarding the complexity of MST computation in constant diameter graphs for any $D \geq 5$. More generally, we settle the complexity on the quality of low-congestion shortcuts in these graphs, matching the $\widetilde{\Omega}(n^{(D-2)/(2D-2)})$ lower bound by Elkin \cite{Elkin04} and Das-Sarma et al. \cite{SarmaHKKNPPW11,sarma2012distributed}. Our key result is: \\
\begin{mdframed}[hidealllines=true,backgroundcolor=gray!25]
\vspace{-5pt}
\BTHM\label{thm:main-result}
Let $D \geq 3$ be a constant. For any graph $G$ on $n$ vertices and of diameter $D$, there exists a randomized distributed algorithm for computing low-congestion $(\congestion,\dilation)$ shortcuts with quality $\congestion+\dilation=\widetilde{O}(k_D)$ in $\widetilde{O}(k_D)$ rounds where $k_D = n^{(D-2)/(2D-2)}$, w.h.p.\footnote{As usual, w.h.p. refers to a success guarantee of $1-1/n^c$ for any input constant $c>1$.}. 
\ETHM
\end{mdframed}
It is noteworthy that our approach is completely independent of the shortcut algorithm by Kitamura et al. \cite{KitamuraKOI19} for the case of $D=4$. Our shortcut construction is based on a very simple procedure (similar to the algorithm of Kitamura et al. for $D=3$ \cite{KitamuraKOI19}) where each edge is sampled into a shortcut subgraph (of sufficiently many nodes) with some fixed probability. A sampling-based approach for low congestion shortcuts has been also applied by Ghaffari, Kuhn and Su \cite{GhaffariKS17} for the family of expander graphs. 
The congestion bound of our shortcuts follows immediately by a simple application of the Chernoff bound. Our main efforts are devoted for providing a new analysis for bounding the diameter (i.e., dilation) of each augmented subgraph. We introduce the concept of shortcut trees, auxiliary graphs which allows us to analyze the dilation of the construction, by applying a recursive argument on the shortcuts introduced on each $s$-$t$ shortest path in $G[S_j]$. We note that our approach for bounding the dilation using the concept of shortcut trees is the main technical contribution of this paper. Using the improved shortcuts, we obtain a collection of improved and existentially tight algorithms for any $n$-vertex graph of constant diameter $D=O(1)$. \\
\begin{mdframed}[hidealllines=true,backgroundcolor=gray!25]
\vspace{-5pt}
\BCR[Distributed MST and $(1+\epsilon)$ Approximate Minimum Cut]\label{cor:MST}
For every $n$-vertex graph with diameter $D=O(1)$, there is a randomized distributed algorithm for computing MST and $(1+\epsilon)$ approximation of the minimum cut in $\widetilde{O}(n^{(D-2)/(2D-2)})$ rounds. These bounds are nearly existentially tight by  \cite{SarmaHKKNPPW11,sarma2012distributed}. 
\ECR
\end{mdframed}
Our improved shortcuts also have various immediate applications for additional problems, such as approximate SSSP \cite{HaeuplerL18} and $O(\log n)$-approximation of the minimum weight two-edge connected subgraphs \cite{dory2019improved}.

There are two interesting open ends to our construction. The first is concerned with the message complexity. The total message complexity of our shortcut algorithm is bounded by $\widetilde{O}(m n^{(D-2)/(2D-2)})$. It will be interesting to improve this bound to $\widetilde{O}(m)$ messages. An additional aspect is concerned with a derandomization of our construction. These aspects have been settled for general graphs by Haeupler, Hershkowitz and Wajc \cite{HaeuplerHW18}.

\section{The Low-Congestion Shortcut Algorithm}
We start by presenting the centralized construction of the shortcuts, and then explain the distributed implementation. Let $\mathcal{S}=\{S_1,\ldots, S_\ell\}$ be a collection of connected node-disjoint subsets in $G$, and define
$$k_D=n^{(D-2)/(2D-2)}~\mbox{~and~} N=\lceil n/k_D \rceil~.$$
A subset $S_i$ is said to be \emph{small} if $|S_i|\leq k_D$, and otherwise it is \emph{large}. Clearly, it is sufficient to compute shortcut subgraphs for at most $N$ large subsets. For ease of notation, let $S_{1},\ldots, S_{N}$ be the large subsets in $\mathcal{S}$. We start by considering the case where the diameter $D$ is even, and towards the end explain the minor modifications required to handle odd diameters, as well.
\\
\begin{mdframed}[hidealllines=true,backgroundcolor=gray!25]
\vspace{-3pt}
\textbf{Centralized Shortcut Construction:} 
For every $i \in \{1,\ldots, N\}$ compute the subgraph $H_i$ as follows:
\begin{enumerate}
\item Each node $v \in S_{i}$ adds all its incident edges to $H_i$.
\item Each node $u \in V \setminus S_i$ adds each of incident edge $(u, v)$ to $H_i$ independently with probability $\mathbf{p}=(k_D \cdot \log n)/N$. 
\item \textbf{Repeat} Step (2) for $D$ (independent) times.
\end{enumerate}
\end{mdframed}
Note that in this description, each edge $\{u,v\}$ is sampled in a directed manner, where $(u,v)$ (resp., $(v,u)$) is sampled by the endpoint $u$ (resp., $v$) $D$ times into $H_i$ independently with probability $\mathbf{p}$. 

\paragraph{The congestion argument.} We show that the congestion of the subgraphs $H_1,\ldots, H_N$ is $O(D \cdot k_D \cdot \log n)$, w.h.p. Note that since the subsets $S_{i}$ pairwise disjoint, the congestion introduced by Step (1) of the algorithm is bounded\footnote{Since each edge $(u,v)$ is added (at most) to the subsets of $u$ and $v$.} by 2. Consider now the congestion induced by the edges added in Step (2). Every edge $\{u,v\}$ is sampled by both of its endpoints $u$ into $2D \cdot N \cdot \mathbf{p}=O(K_D\cdot \log n)$ subgraphs, in expectation. Thus by a simple application of the Chernoff bound, we get that each (directed) edge gets sampled into at most $O(k_D \cdot \log n)$ subgraphs, w.h.p. 

The heart of the analysis for bounding the diameter of each $G[S_i] \cup H_i$ by $O(k_D \log n)$ appears in Subsection \ref{sec:dilation}. We next provide a distributed implementation which mimics the above mentioned centralized construction using $\widetilde{O}(k_D)$ rounds in the \congest\ model of distributed computing. 

\paragraph{Distributed implementation.} Following \cite{GhaffariH16}, the input to the distributed construction assumes that each part $S_i \in \mathcal{S}$ is identified by the identifier of the node $v_i$ of maximum ID in $S_i$. At the beginning of the algorithm, all nodes in $S_i$ know the ID of $v_i$, and in the output of the algorithm, each node in $V$ knows its incident edges in each $G[S_i] \cup H_i$. 

Note that the nodes are required to know $k_D$ which is a function of the exact diameter $D$ and the number of nodes $n$. The exact knowledge of $n$ and a $2$-factor approximation of the diameter can be obtained within $O(D)$ rounds by computing a BFS tree from an arbitrary node. We first describe the construction assuming that all nodes know exactly $D$, and thus $k_D$, and then explain how to omit this assumption.  

\noindent \textbf{Shortcuts construction assuming the knowledge of $D$.} First, the algorithm identifies the collection of large subsets and number them in a sequential manner in $[1,N]$. This can be done by applying the following $O(k_D)$-round procedure. Compute a (possibly) truncated BFS tree rooted at $v_i$ of depth at most $k_D$ is computed in parallel in each graph $G[S_i]$ for every $i$. As a result, we obtain a $k_D$-depth BFS tree rooted at each node $v_i$, which allows $v_i$ to determine if its component $S_i$ is large (e.g., if the tree is not spanning all nodes in $S_i$). Using additional $O(D)$ rounds, the nodes can also number the large components from $[1,N]$, that is, the leader $v_i$ of each large component $S_i$ knows its index $i$. 

Next, as all nodes know the number $N$, they can locally compute their edges in each $H_i$ for $i \in \{1,\ldots, N\}$. Specifically, each node $u$ samples each edge $(u,v)$ into $H_i$ by sampling the edge $D$ time independently with probability of $\mathbf{p}$. The edge $(u,v)$ is taken into $H_i$ if at least one of these sampling steps is successful. Our next goal is to compute a (possibly truncated) BFS tree in each $G[S_i] \cup H_i$ rooted at the node $v_i$. The depth of each tree is restricted to $\widetilde{O}(k_D)$. By the distributed input to the problem, all nodes in $S_i$ know $v_i$. However, nodes not in $S_i$, but possibly in $H_i$, might not know it but rather only the index $i$. Recall that it is desired that each edge $(u,v)$ will know the identifier $ID(v_i)$ of each shortcut subgraph $H_i$ to which it belongs. This is obtained by applying the following $\widetilde{O}(k_D)$-round procedure.

All $N$ (possibly truncated) BFS trees in $G[S_1] \cup H_1, \ldots, G[S_N] \cup H_N$ are computed in parallel using the \emph{random delay} approach \cite{leighton1999fast,ghaffari2015near}:
\BTHM[{\cite[Theorem 1.3]{ghaffari2015near}}]\label{thm:delay}
Let $G$ be a graph and let $A_1,\ldots,A_m$ be $m$ distributed algorithms in 
the \congest model, 
where each algorithm takes at most $\dilation$ rounds, and where for each 
edge of $G$, at most $\congestion$ messages need to go through it, in total 
over all these algorithms. Then, there is a randomized distributed
algorithm (using only private randomness) that, with high probability, produces 
a schedule that runs all the algorithms in $O(\congestion +\dilation \cdot \log 
n)$ rounds, after $O(\dilation \log^2 n)$ rounds of pre-computation.
\ETHM

In our context, the congestion and dilation bounds of each of the $N$ sub-algorithms is set to $O(k_D \cdot \log n)$. 
We also restrict the overall running time to at most $O(k_D \log^2 n)$ rounds.
For simplicity, we assume that all the nodes have an access to a string $\mathcal{SR}$ of shared randomness, where $\mathcal{SR}[i]$ for $i \in \{1, \dots, N\}$ describes a random value in the range $\{1, \dots, k_D\}$ which specifies the starting phase of the $i^{th}$ sub-algorithm. \cite{ghaffari2015near} showed that this string can be made of $O(\log^2 n)$ random bits, which can be sent to all nodes in $O(D+\log n)$ rounds. 

We divide time into phases of $O(\log n)$ rounds, and delay the start of each BFS $T_i$, rooted in $v_i$, in the subgraph $G[S_i] \cup H_i$ by a random delay of $t_i=\mathcal{SR}[i]$. Once a BFS algorithm starts at node $v_i \in S_i$, the related BFS in $G[S_i] \cup H_i$ grows at a synchronous speed of one hop per phase. Since we restrict each edge to participate in at most $O(k_D \log n)$ many $H_i$'s subgraphs, w.h.p., per phase and per edge $e'=(v, u)$, there are at most $O(\log n)$ BFS tokens scheduled to go through edge $e'$ from $v$ to $u$, in this phase. Each edge $(u,v) \in H_i$ learns the identity of $v_i$ at the time at which the BFS token of $v_i$ arrives this edge. 
The collection of the $O(k_D\log n)$-depth (possibly truncated) BFS trees in $G[S_i] \cup H_i$ for $i \in \{1,\ldots, N\}$ can be all computed in $O(k_D \cdot \log^2 n)$ rounds. 
\\
\\
\noindent \textbf{Omitting the assumption on knowing $D$.}
Recall that by computing a BFS tree in $G$, all nodes obtain a $2$-factor approximation $D'$ for the graph diameter. 
Our approach is based on guessing the diameter starting with the lowest guess $D'/2$ to $D'$. The algorithm terminates for the smallest value $D''$ for which low-congestion shortcuts with quality $O(k_{D''}\log n)$ are computed. Towards that goal, we slightly modify the above mentioned algorithm so that given a diameter estimate $D''$, where possibly $D' \leq D$, the algorithm is restricted to run in only $O(k_{D''}\cdot \log^2 n)$ rounds. At the end of the algorithm, the nodes learn whether shortcuts with quality $O(k_{D''}\log n)$ have been successfully computed or not. In the positive case, the algorithm terminates and otherwise it proceeds to the next guess $D''+1$.
 The correctness will then follow from the correctness of the shortcut algorithm for the correct diameter value $D$.

 It remains to explain how to verify if the algorithm has successfully computed shortcuts of quality $O(k_{D''}\log n)$. Let $H'_1,\ldots, H'_\ell$ be the (possibly incomplete) shortcuts computed for $S_1,\ldots, S_\ell$. 
We enforce the congestion and dilation of these shortcuts to $O(k_{D''}\log n)$ as follows. First, when letting each node $(u,v)$ sample its edges into the $H'_i$ subgraphs (of the at most $N''=\lceil n/k_{D''}\rceil$ large components), the construction terminates if the edge congestion exceeds the allowed value of $O(k_{D''}\log n)$. In addition, in each large components, the nodes compute a (possibly truncated) BFS tree of depth $O(k_{D''}\log n)$. 

To verify these shortcuts, upon computing the (possibly truncated) BFS trees in parallel, the leader $v_i$ of each component $S_i$ determines if the computed BFS tree  $T'_i$ in $G[S_i] \cup H'_i$ indeed spans all nodes in $S_i$. The guess $D'$ is considered to be \emph{successful} only if all nodes $v_i$ for $i \in \{1,\ldots, \ell\}$ have determined that their shortcut construction is complete. Since $k_D$ is an increasing function in $D$, we have that the total running time of this guessing-based algorithm is bounded by $D \cdot O(k_D \log^2 n)=O(k_D \log^2 n)$. 
In the remaining part of the paper, we focus on showing the dilation argument, i.e., proving that the diameter of each augmented graph $G[S_i] \cup H_i$ is at most $\widetilde{O}(k_D)$.

\section{Dilation Argument}\label{sec:dilation}
The structure of this section is as follows. We start by providing the high level structure of the argument for a fixed augmented graph $H=G[S_j] \cup H_j$ for some $j \in \{1,\ldots, N\}$. 
In Subsection \ref{sec:shortcut-tree} we introduce the notion of \emph{shortcut trees} which serve as our key analytical tool for bounding the diameter of $H$. Then in Subsection \ref{sec:comp-arg}, we provide the detailed dilation argument. For every $k \in \{1,\ldots, D\}$, let $E_{k}$ be the set of edges sampled into $H$ in the $k^{th}$ application of the edge sampling of Step (2) in the centralized computation of the shortcut subgraph $H$.

\BTHM\label{thm:dilation}
W.h.p., the diameter of $H$ is bounded by $O(k_D \log n)$. 
\ETHM

\paragraph{High-Level Description of the Argument.} We fix a node pair $s, t \in G[S_j]$ and let $P=[s=v_1, \ldots, v_{2d-1}=t]$ be their shortest path in $G[S_j]$. Recall that we assume that the diameter $D$ is even, and later on we explain the modifications for the odd diameter case. The structure of the argument is as follows. We claim that at least \emph{one} the following three scenarios hold, w.h.p., for the path $P$:
\begin{itemize}
\item{(O1)} $\dist_{H}(v_1,v_d)=O(k_D)$
\item{(O2)} $\dist_{H}(v_{d+1},v_{2d-1})=O(k_D)$
\item{(O3)} $\dist_{H}(v_1,v_{2d-1})=O(k_D)$
\end{itemize}
In the case where (O3) holds, we are done. Assume that (O1) holds. In this case, the argument is applied recursively on the path $P'=[v_{d+1},\ldots, v_{2d-1}]$. Since the depth of the recursive argument is $O(\log n)$, and each step provides a shortcut of length $O(k_D)$, the final bound on the diameter will be $O(k_D \log n)$. The same holds in a symmetric manner when (O2) holds. In order to prove that w.h.p. one of these three scenarios must hold, we introduce the notion of \emph{shortcut trees} which plays a key role in the dilation analysis. 

We need the following notations. For a node $u \in V$ and a subset $X \subseteq V$, let $\dist_G(u,X)=\min_{v \in X} \dist_G(u,v)$. For a forest $T$, let $\pi_{T}(u,v)$ be the unique $u$-$v$ path in $T$ if such exists. For $v \in (T)$, let $T(v)$ be the sub-tree rooted at $v$ in $T$.

\subsection{Shortcut Trees}\label{sec:shortcut-tree}
A shortcut tree is a spanning tree of the following auxiliary graph $G_{P,Q,\ell}$ defined for a path $P=[p_1,\ldots, p_{2d-1}]$, a node-set $Q=\{q_1,\ldots, q_{d'}\}$, as well as, an integer $\ell$ that provides an upper bound on the distance between $P$ and $Q$ in $G$, defined by $\dist_G(P,Q)=\max_{u \in P}
\dist_G(u,Q)$. The main purpose of this auxiliary graph is to fix the length of all $V(P) \times Q$ shortest paths to be exactly $\ell$. This is achieved by repeating the appearance of certain nodes on $V(P) \times Q$ paths that are shorter than $\ell$, as explained next. 

\paragraph{Auxiliary graph definition.}
The graph $G_{P,Q,\ell}$ is a layered-graph made of a collection of $\ell+2$ layers $L_1,\ldots, L_{\ell+2}$, where $L_1=V(P)$, $L_{\ell+1}=Q$ and $L_{\ell+2}=\{r\}$ (representing the root). For every $i\in \{2,\ldots, \ell\}$, each layer $L_i=\{v^i_1,\ldots, v^i_n\}$ consists of the copies of all nodes $V(G)=\{v_1,\ldots, v_n\}$ in $G$. 
The edge set of $G_{P,Q,\ell}$ consists of the following subsets of edges. The root $r$ is connected to every node in $L_{\ell+1}=Q$. For consistency with the other layers, let $L_1=V(P)=[p_1,\ldots, p_{2d-1}]$ be also referred to by $[v^1_1,\ldots, v^1_{2d-1}]$ (i.e., $p_j=v^1_j$ for every $j \in \{1,\ldots, 2d-1\}$).
In addition, for every $i \in \{1,\ldots, \ell\}$, the edges between layers $L_{i}$ and $L_{i+1}$ correspond to the $G$-edges between these nodes. 
Specifically, define:
$$E(L_i,L_{i+1})=\{ (v^i_j, v^{i+1}_j) ~\mid~ v^i_j \in L_i, v^{i+1}_j \in L_{i+1}\} \cup \{(v^i_j, v^{i+1}_k) ~\mid~ v^i_j \in L_i, v^{i+1}_k \in L_{i+1} \mbox{~and~}(v_j,v_k) \in E(G)\}~.$$
%
%
That is, each node $v^i_j$ in layer $i$ is connected in layer $i+1$ to its own copy $v^{i+1}_j$, as well as to the $(i+1)^{th}$ copies of its neighbors in $G$. 
The edge set of $G_{P,Q,\ell}$ is then given by
$$E(G_{P,Q,\ell})=\{(r,q_j) ~\mid~ q_j \in Q\} \cup \bigcup_{i=1}^{\ell}E(L_i,L_{i+1})~.$$
Note that all edges, except those incident to $r$ and the self-copies edges, correspond to $G$-edges. In addition, each $G$-edge $(u,v)$ (in this direction) has at most one corresponding edge in each subset $E(L_i,L_{i+1})$ for $i \in \{1,\ldots, \ell\}$. 

\paragraph{Auxiliary tree definition.} Let $T_{P,Q,\ell}$ be a BFS tree rooted at $r$ in the graph $G_{P,Q,\ell}$. The BFS tree $T_{P,Q,\ell}$ has depth $\ell+2$, and since $\dist_G(P,Q)\leq \ell$, it holds that each leaf node $p_i \in P$ is connected to the root $r$. (That is, the leaf set of $T_{P,Q,\ell}$ is precisely $V(P)$). Also note that $T_{P,Q,\ell}$ might contain only a strict subset of the nodes of $G_{P,Q,\ell}$, see Fig. \ref{fig:aux-graph} (Left) for an illustration. We next describe a random sparsification of the tree $T_{P,Q,\ell}$ which mimics the edge sampling into the shortcut subgraph $H$ in Step (2) of the centralized construction.  Define the graph $T_{P,Q,\ell}[\mathbf{p}]\subseteq T_{P,Q,\ell}$ by sampling each \emph{non-self edge}\footnote{An edge in $E(L_i,L_{i+1})$ is non-self if it connects copies of distinct vertices in $G$.} in 
$$\bigcup_{i=2}^{\ell}E(L_i,L_{i+1}) \cap E(T_{P,Q,\ell})$$
independently with probability $\mathbf{p}=\log n\cdot k_D/N=\log n \cdot n^{-1/(D-1)}$. The edges in $E(L_1,L_2)$ and the edges incident to $r$ are taken into $T_{P,Q,\ell}[\mathbf{p}]$ with probability of $1$. (In our argument, $P$ will correspond to some shortest path in $G[S_j]$, and since in Step (1) of the algorithm, all edges incident to $S_j$ are taken into $H$ with probability $1$, the corresponding edges in the tree $T_{P,Q,\ell}$ are also included in $T_{P,Q,\ell}[\mathbf{p}]$.) 
For our purposes in the construction of $T_{P,Q,\ell}[\mathbf{p}]$, we use the same randomness used in the construction of the shortcut subgraph $H$. Recall that in Step (2) of the shortcut algorithm, each (directed) edge in $(v_i,v_{i'}) \in E$ with $v_i \notin S_j$ is sampled by $v_i$, $D$ independent times each with probability $\mathbf{p}$. The edge $(v^k_i, v^{k+1}_{i'}) \in E(L_k,L_{k+1}) \cap E(T_{P,Q,\ell})$ for $k \in \{2,\ldots, \ell\}$ is taken into $T^*$ based on the $(k-1)^{th}$ sampling of the edge $(v_i,v_{i'})$ in Step (2). 

We now describe the edges in $T_{P,Q,\ell}[\mathbf{p}]$ formally. Recall that $E_k$ are all $G$-edges sampled into $H$ in the $k^{th}$ application of Step (2) for every $k \in \{1,\ldots, D\}$. Then, formally the graph $T_{P,Q,\ell}[\mathbf{p}]$ consists of the following edges:
\begin{itemize} 
\item $(E(L_1,L_2) \cup E(L_{\ell+1}, L_{\ell+2})) \cap T_{P,Q,\ell}$,
\item self-edges: $\{(v_i^k, v_i^{k+1})\in E(L_k,L_{k+1}) \cap E(T_{P,Q,\ell}) ~\mid~ v_i \in L_k, k \in \{2,\ldots, \ell\}\}$,

\item sampled non self-edges: $\{(v_i^k, v_j^{k+1}) \in E(L_k,L_{k+1}) \cap E(T_{P,Q,\ell}) ~\mid~ (v_i,v_j) \in E_{k-1}, k \in \{2,\ldots, \ell\}\}$. 
\end{itemize} 
That is, for every $k \in \{2,\ldots, \ell\}$, each edge $(v_i^k, v_j^{k+1}) \in E(L_k,L_{k+1}) \cap E(T_{P,Q,\ell})$ is added only if it was sampled in the $(k-1)^{th}$ repetition of Step (2). 
%
Finally, the dilation argument is applied on the graph $T^*_{P,Q,\ell}=T_{P,Q,\ell}[\mathbf{p}] \cup E(P)$.
See Fig. \ref{fig:aux-graph} for an illustration. 

\begin{figure}[h!]
\includegraphics[scale=0.28]{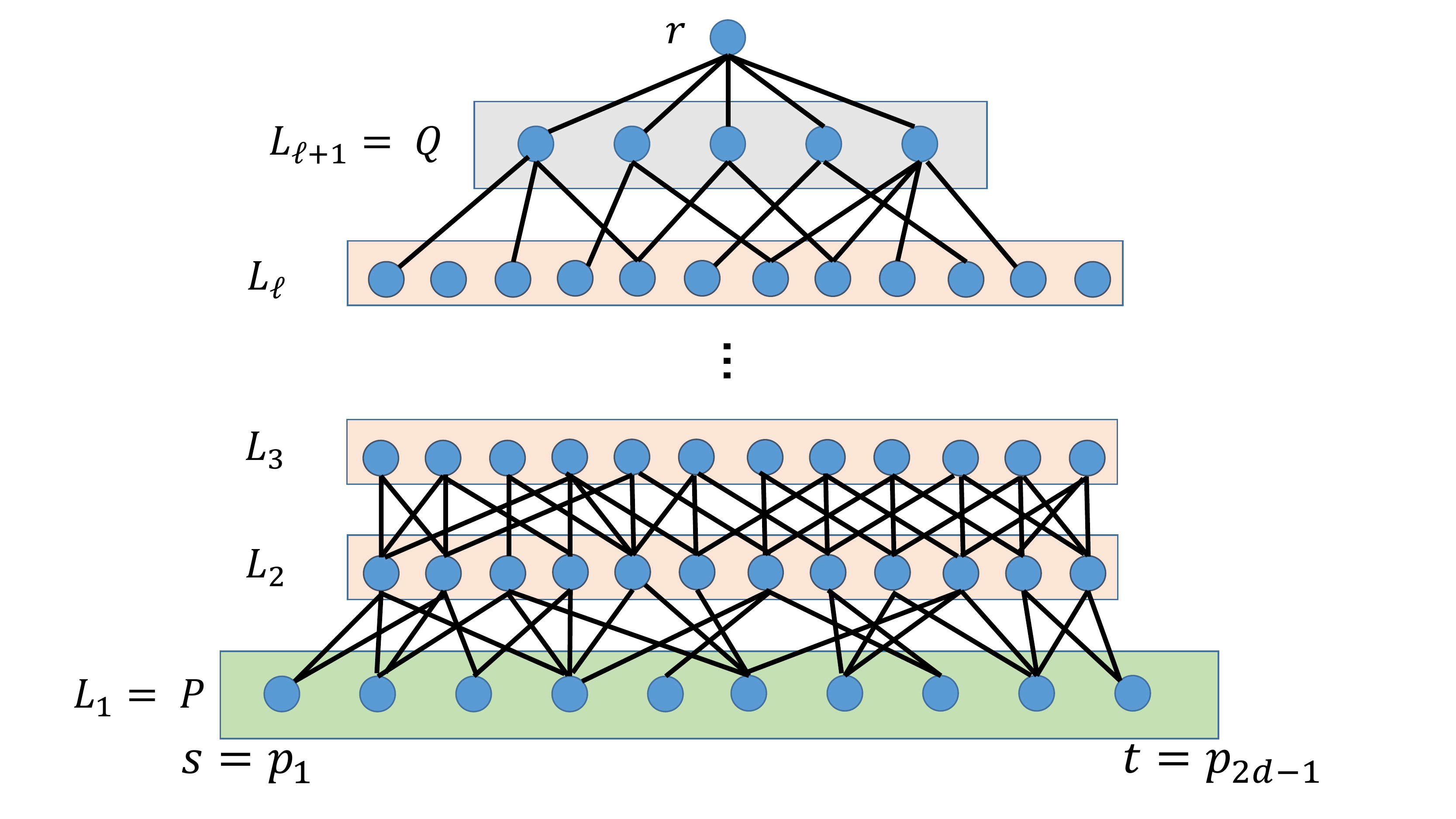}
\includegraphics[scale=0.28]{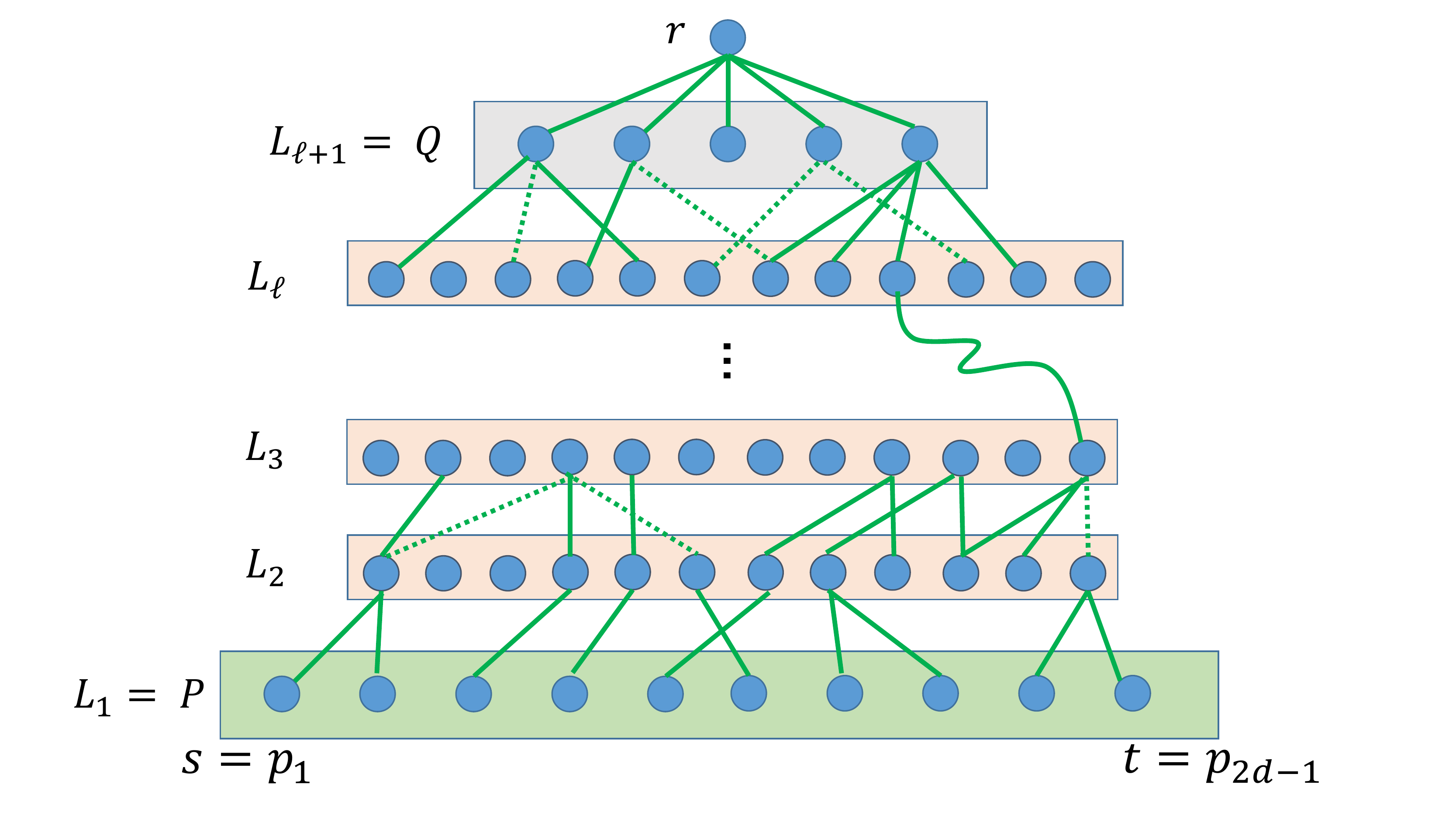}
\caption{\sf Left: An illustration of the auxiliary graph $G_{P,Q,\ell}$ and the BFS tree $T_{P,Q,\ell}$ shown in green. Each node in $P$ is connected by an $(\ell+1)$-length path to the root in $T_{P,Q,\ell}$. Right: The sampled graph $T^*$ is obtained by sampling each of the BFS edges between layers $L_i$ and $L_{i+1}$ independently with probability of $\mathbf{p}=n^{-1/(D-1)}$, for every $i \in \{1, \ldots, \ell\}$. The sampled edges of $T^*$ are shown in dashed. \label{fig:aux-graph} 
}
\end{figure}

To avoid cumbersome notation, let $T^*=T^*_{P,Q,\ell}$. The key lemma in our context is the following. Let $s=p_1$ and $t=p_{2d-1}$.
\BL\label{lem:key-tstar}
W.h.p., either $\dist_{T^*}(s,t)=O(k_D)$, or else it must hold (w.h.p.) that
$\dist_{T^*}(s,L_j)=O(k_D)$ for every $j \in \{2,\ldots, \min\{\ell+1, D/2+1\}\}$. The high probability guarantee 
\EL
To prove Lemma \ref{lem:key-tstar} we introduce the notion of $(i,k)$ walks. Throughout, the randomized arguments are applied over the sampling of the edges of $T_{P,Q,\ell}$ into $T^*$. 

\paragraph{Probabilistic analysis of $(i,k)$ walks.} An $(i,k)$ walk in the graph $T^*$ is a walk that starts at a node $p_i$ (the $i^{th}$ node of the path $P$, in layer $1$) and ends at some node in the set $V^+_i(P) \cup L_k$ where $V^+_i(P)=[p_i,\ldots, p_{2d-1}]$. To describe the structure of a legal $(i,k)$ walk, it is convenient to view the nodes in layer $L_1$ of $T^*$ ordered from left to right, namely, $L_1=V(P)=(p_1,\ldots, p_{2d-1})$. 

\BD[$(i,k)$ unit]
An $(i,k)$ \emph{unit} is a walk that starts at node $p_{i}$ and ends at node $p_j$ for $j\geq i$ defined as follows. 
Let $u_{i,k}$ be the up-most ancestor of $p_i$ in $T^* \cap \bigcup_{\ell=2}^{k} L_\ell$. That is, the ancestor of $p_i$ in the maximum layer $\ell \leq k$. 
Letting $p_{j}$ be the right-most $P$-node (i.e., node of largest $j$ index on $P$) in the subtree of $T^*(u_{i,k})$, then the $(i,k)$ unit is defined by the $T^*$-walk:
$$P'=\pi_{T^*}(p_i, u_{i,k}) \circ \pi_{T^*}(u_{i,k},p_{j})~.$$ 
\ED
Note that $(i,k)$ unit is indeed a walk (and not a simple path), in the case where the least-common-ancestor of $p_i$ and $p_j$ in $T^*$ is strictly below (i.e., in a smaller level then) $u_{i,k}$.  A \emph{maximal} $(i,k)$ walk is a walk $P''$ defined by applying $2d-i$ steps defined as follows. Initially, let $P_1$ be an $(i,k)$ unit, and let $p_{i_1}$ be the second endpoint of $P_1$, thus $i_1\geq i$. At any step $j \in \{1,\ldots, 2d-i\}$, let $P_j$ be a walk ending at some node $p_{i_j}$. If $p_{i_j}=t$, then let $P_{j+1}=P_j$. Otherwise, let $P'_j$ be an $(i_j+1,k)$-unit, and define $P_{j+1}=P_j \circ (p_{i_j}, p_{i_j+1}) \circ P'_j$. The maximal $(i,k)$ walk is given by $P''=P_{2d-i}$. An $(i,k)$-\emph{walk} is \emph{any} sub-walk of a maximal $(i,k)$ walk. See Fig. \ref{fig:ikpath} for an illustration.
%
%

\begin{figure}[h!]
\begin{center}
\includegraphics[scale=0.40]{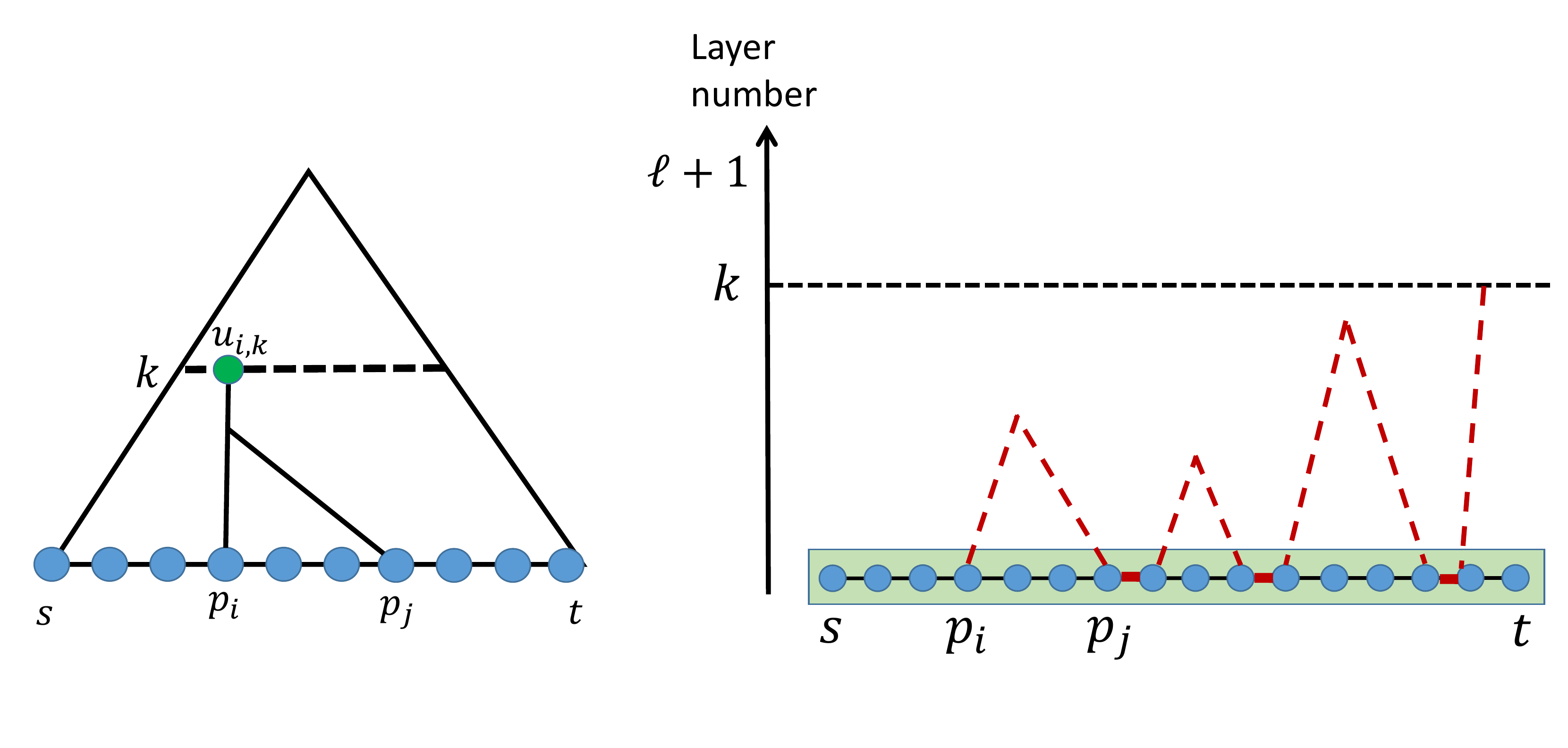}
\caption{\sf Left: The forest $T^*$ and an $(i,k)$ unit $P'=\pi_{T^*}(p_i,u_{i,k})\circ \pi_{T*}(u_{i,k},p_j)$, where $j\geq i$. Since the least-common-ancestor of $p_i,p_j$ in $T^*$ is below $u_{i,k}$, $P'$ is indeed a walk (and not a simple path). Note that $P'$ contains a unique $k$-level node, namely, $u_{i,k}$. Right: an $(i,k)$ walk made by concatenating $(j,k)$ walks, for increasing $j$ values, interleaved with edges from $P$. \label{fig:ikpath} 
}
\end{center}
\end{figure}

Let $P'=[p_i=u_1,\ldots, u_b]$ be an $(i,k)$ walk, and consider the subset of $P$-nodes in $P'$ ordered according to their appearance in $P'$. We first observe that the sequence of nodes in $P'$ is \emph{monotone} in the sense of being sorted from left to right. Also observe that $P'$ can be written as a concatenation of walks $P''=[u,v]$ where $u \in P$ and $v \in P \cup L_k$.

\begin{obs}\label{path-distinct}
Let $w_1,\ldots, w_a$ be the multi-set of level $k$ nodes (i.e., in $L_k$) of an $(i,k)$ walk $P'$ sorted according to their appearance on $P'$ (in increasing distance from $p_i$). Then, $w_j \neq w_{j'}$ for every $j \neq j' \in \{1,\ldots, a\}$.
\end{obs}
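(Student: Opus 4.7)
The plan is to prove the observation first for a \emph{maximal} $(i,k)$-walk; since the stated $(i,k)$-walk $P'$ is, by definition, some sub-walk of such a maximal walk $P''$, the sequence $w_1,\ldots,w_a$ of its $L_k$-visits is a contiguous sub-sequence of the $L_k$-visits along $P''$, so distinctness is inherited from the maximal case. Fix then a maximal walk $P''=P_1\circ(p_{i_1},p_{i_1+1})\circ P_2\circ(p_{i_2},p_{i_2+1})\circ\cdots$, where each $P_t$ is the $(i_t,k)$-unit starting at $p_{i_t}$ and ending at the right-most $P$-node $p_{j_t}$ of the subtree $T^*(u_{i_t,k})$, with $i_{t+1}=j_t+1$.

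The first step is a layer-monotonicity observation for a single unit $P_t=\pi_{T^*}(p_{i_t},u_{i_t,k})\circ\pi_{T^*}(u_{i_t,k},p_{j_t})$. Because $T_{P,Q,\ell}$ is a BFS tree in a layered graph whose edges only connect consecutive layers, each edge of $T^*$ changes layer by exactly one, so $\pi_{T^*}(p_{i_t},u_{i_t,k})$ ascends strictly from $L_1$ to the layer of $u_{i_t,k}$ and $\pi_{T^*}(u_{i_t,k},p_{j_t})$ descends strictly back to $L_1$. Since $u_{i_t,k}$ lies in some layer of index at most $k$, the unit meets $L_k$ in at most one node, which must be $u_{i_t,k}$ itself, and only when $u_{i_t,k}\in L_k$. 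Consequently the entries of $w_1,\ldots,w_a$ are precisely the top-ancestors $u_{i_t,k}$ that happen to lie in $L_k$, each contributed by exactly one unit.

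The heart of the argument is then to show that whenever two distinct units $P_a,P_b$ with $a<b$ both have their top-ancestor in $L_k$, those top-ancestors are distinct. I will argue by contradiction: suppose $u_{i_a,k}=u_{i_b,k}=:u$. Then $u$ is an ancestor of $p_{i_b}$ in $T^*$, so $p_{i_b}\in T^*(u)=T^*(u_{i_a,k})$. But the construction of the maximal walk forces $p_{j_a}$ to be the right-most $P$-node of $T^*(u_{i_a,k})$ and sets $i_{a+1}=j_a+1$, so every later starting index satisfies $i_t\geq j_a+1$, placing $p_{i_t}$ strictly to the right of $p_{j_a}$ on $P$; hence no $p_{i_t}$ with $t>a$ can lie in $T^*(u_{i_a,k})$, contradicting $p_{i_b}\in T^*(u)$. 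Combined with step one, this proves the observation for $P''$ and hence for $P'$.

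I expect the main subtlety to be keeping two different ``upward'' orderings straight---the layer index in $G_{P,Q,\ell}$ versus the ancestor relation in $T^*$---and, within step one, ruling out that the internal back-tracking which a unit can exhibit (when the $T^*$-least-common-ancestor of $p_{i_t}$ and $p_{j_t}$ lies strictly below $u_{i_t,k}$) sneaks in a second $L_k$-visit. The strict layer-monotonicity of the two $\pi_{T^*}$ sub-paths handles this cleanly, since any repeated intermediate vertex then lies at a layer strictly below that of $u_{i_t,k}$ and hence strictly below $k$.
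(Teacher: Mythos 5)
Your proof is correct and takes the same core approach as the paper's: decompose the walk into units with strictly increasing starting indices, note that each unit meets $L_k$ at most once, and derive a contradiction from the ``right-most $P$-node'' property of units if two units were to share a level-$k$ node. Your write-up is somewhat more careful than the paper's — you explicitly reduce to the maximal walk and justify the at-most-one-$L_k$-visit-per-unit claim via strict layer-monotonicity of tree paths in the layered BFS tree — whereas the paper takes those points for granted.
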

\begin{proof}
Let $P_1,\ldots, P_a$ be the $(i_j,k)$ units composing $P'$ such that $w_j \in P_j$ for every $j \in \{1,\ldots, a\}$. Note that each $P_j$ contains a unique level $k$ node. It then holds that $i< i_1 < i_2 \ldots < i_a$. Assume towards contradiction that $w_j=w_{j'}$ where $j< j'$. 
By the definition of $P_{j'}$, we have that $p_{i_{j'}}$ is in the sub-tree of $w_{j}=w_{j'}$, in contradiction to the definition of $P_{j}$ which ends in a node $p_{i_g}$ for $i_g < i_{j'}$. 
The claim follows.
\end{proof}

The key lemma in our context shows that w.h.p. the graph $T^*$ contains short $(i,k)$ walks for every $i$ and $k$.
\BL\label{lem:length-ik}
For every $i \in \{1,\ldots, 2d-1\}$ and $k \in \{2,\ldots, \ell+1\}$, $T^*$ contains an $(i,k)$ walk between $p_i$ to a node in $\{t\} \cup L_k$ of length at most $(c \cdot k_D/N)^{-k+2}$ for a sufficiently large constant \footnote{This constant effects the high probability guarantee of $1-1/n^{c'}$ on the final diameter bound, where $c'$ is some constant that depends on $c$.} $c \geq 8$, w.h.p. Moreover, this high probability guarantee uses at most $k$ out of $D$ (independent) repetitions of Step (2). 
\EL
\begin{proof}
The proof is shown by induction on $k$. The base case of $k=2$ follows for every $i \in \{1,\ldots, 2d-1\}$, as all edges of $E(L_1,L_2)\cap T_{P,Q,\ell}$ are kept in $T^*$ with probability of $1$. Let $R_{i,k}$ be an indicator random variable for the event that there exists an $(i,k)$ walk of length at most $\ell_k=(c \cdot k_D/N)^{-k+2}$ ending at some node in $L_k \cup \{t\}$. 


Assume that w.h.p. $R_{i,k}=1$ for every $i \in \{1,\ldots, 2d-1\}$. We next show that also $R_{i,k+1}$ holds w.h.p. for every $i$. Let $R_{k}$ be the indicator random variable that $R_{i,k}=1$ for every $i \in \{1,\ldots, |P|\}$. Thus, by induction hypothesis, $R_k=1$ w.h.p. as well. Next, observe that
$$Pr[R_{i,k+1}] \geq Pr[R_{i,k+1}~\mid~ R_k] \cdot Pr[R_k]~.$$
As $Pr[R_k] \geq 1-1/n^{c''}$ for some constant $c''\geq 2$, it is sufficient to bound the probability $Pr[R_{i,k+1} ~\mid~ R_k]$. 
To do that, we fix an $i \in \{1,\ldots, 2d-1\}$, and bound the probability of having a particular $(i,k+1)$ walk. 
We define a walk $P'$ that starts at $p_i$ and ends at a $P$-node. The walk $P'$ is defined in $\ell=(c/2) \cdot N/k_D$ steps. Let $P_1$ be an $(i,k)$ unit. In step $j \in \{1,\ldots, \ell-1\}$, we are given a path $P_1 \circ \ldots \circ P_j$ that ends in $p_{d_j}$. If $p_{d_j}\neq t$, let $P_{j+1}=(p_{d_j},p_{d_j+1}) \circ P'_{j+1}$, where $P'_{j+1}$ is an $(d_j+1,k)$ unit. Otherwise (if $p_{d_j}=t$), $P_{j+1}=\emptyset$. 

Let $P'=P_1 \circ P_2 \circ \ldots \circ P_\ell$. If $P'$ ends in $t$, we are done since $|P'|\leq \ell (\ell_k +1)\leq \ell_{k+1}$. It remains to consider the case where none of the $P_j$ paths ends in $t$. 
For every path $P_j$ for $j \in \{1,\ldots, \ell\}$, let $w_j$ be the unique level-$k$ node in $L_k \cap V(P_j)$. 
Since $P'$ is an $(i,k)$ walk, by Obs. \ref{path-distinct}, $w_j \neq w_{j'}$ for every $j \neq j' \in \{1,\ldots, \ell\}$.
We now show how to use $P'$ to obtain an $(i,k+1)$ walk of length at most $\ell_{k+1}$. To see this, let $e_j=(w_j, par(w_j))$ be the edge connecting $w_j$ to its parent in the tree $T_{P,Q,\ell}$. By the observation, $e_j \neq e_{j'}$ for every $j \neq j' \in \{1,\ldots, \ell\}$. As each edge $e_j$ is sampled independently into with probability\footnote{This is because $e_j$ is taken into $T^*$ only if its corresponding (directed) $G$ edge was sampled into $E_{k-1}$.} $\mathbf{p}$ into $T^*$, the probability that at least one of these edges, say $e_j$, is in $T^*$ is at least
$$1-(1-\mathbf{p})^\ell =1-(1-\log n \cdot k_D/N)^{(c/2) \cdot N/k_D}\geq 1-1/n^{c/3}~.$$

This yields the $(i,k+1)$ walk $P''=P'[p_i,w_j] \circ (w_j, par(w_j))$ of length at most $\ell\cdot (\ell_k+2)\leq \ell_{k+1}$ as desired. Note that the bound on $(i,k)$ walks only exploits the randomness in the sampled sets $E_1,\ldots, E_{k}$, i.e., the first $k$ sampling repetitions of Step (2). 
\end{proof}
Lemma \ref{lem:key-tstar} then follows by applying Lemma \ref{lem:length-ik} with $i=1$ (as $p_1=s$) and $k\leq D/2+1$. Note that for $k=D/2+1$, the length of the $(1,k)$ walk is bounded by $(c \cdot k_D/N)^{-D/2+1}=O(k_D)$ as desired.  Next, we show the equivalence between an $(i,k)$ walk in $T^*$ to a corresponding path in the subgraph $H$ (for which we provide the diameter bound). 

\begin{obs}\label{lem:relations}
Let $P'$ be an $(i,k)$ walk of length $\ell'$ in $T^*$ with endpoints $p_i$ and $u \in V(P) \cup L_k$. Then, there exists a path in $H$ between $p_i$ and the $G$-copy of $u$ of length at most $\ell'$. 
\end{obs}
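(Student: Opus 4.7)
The plan is to exploit the natural projection $\phi : V(T^*) \setminus \{r\} \to V(G)$ that collapses each layer-copy $v_j^k$ onto its underlying vertex $v_j$. Under this map, $\phi(p_i) = p_i$ (since $p_i = v_i^1 \in L_1$) and, for an endpoint $u \in V(P) \cup L_k$, $\phi(u)$ is precisely the ``$G$-copy'' of $u$ referred to in the statement. Given the walk $P' = [p_i = x_0, x_1, \ldots, x_{\ell'} = u]$ in $T^*$, I would read off the sequence $\phi(x_0), \phi(x_1), \ldots, \phi(x_{\ell'})$ and argue that, after deleting steps where consecutive images coincide, this sequence is a walk in $H$ of length at most $\ell'$ from $p_i$ to the $G$-copy of $u$.

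The bulk of the argument reduces to classifying each edge of $P'$. Since $P'$ avoids the root $r$, every edge $(x_r, x_{r+1})$ is either a $P$-edge of $E(P)$ inside $L_1$, or an edge of $T_{P,Q,\ell}$ in some $E(L_k, L_{k+1})$ for $k \in \{1,\ldots,\ell\}$. If this edge is a self-edge $(v_j^k, v_j^{k+1})$, then $\phi(x_r) = \phi(x_{r+1})$ and the step contributes nothing to the $H$-walk. Otherwise, by the very definition of the layered edge sets, the image $(\phi(x_r), \phi(x_{r+1}))$ is a genuine $G$-edge, and I would verify it lies in $H$ in two cases. First, if the $T^*$-edge lies in $L_1$ or in $E(L_1, L_2)$, then the corresponding $G$-edge is incident to some node of $V(P) \subseteq S_j$ and is therefore added to $H$ by Step~(1) of the centralized construction. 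Second, if the $T^*$-edge lies in $E(L_k, L_{k+1})$ for $k \in \{2,\ldots,\ell\}$, then by the coupling prescribed in the definition of $T^*$, the underlying $G$-edge belongs to $E_{k-1}$, hence to $H$ via the $(k-1)^{\text{th}}$ repetition of Step~(2).

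Concatenating the resulting $G$-edges in order and discarding the self-edge steps yields a walk in $H$ from $p_i$ to $\phi(u)$ using at most $\ell'$ edges, which can be shortened to a path of the same or smaller length. I do not expect a genuine obstacle here---the statement is essentially a bookkeeping exercise---and the only care needed is in making the coupling between the presence of each non-self $T^*$-edge and the corresponding sampling event in Step~(2) fully explicit, which is already spelled out when $T^*$ is defined.
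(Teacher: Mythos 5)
The paper states Observation~\ref{lem:relations} without an explicit proof, treating it as a routine bookkeeping fact. Your argument via the layer-collapsing projection $\phi$ is correct and is essentially the canonical way to prove it: the $(i,k)$ walk never touches the root $r$, so each of its edges is either a $P$-edge in $L_1$ (mapping to an edge of $G[S_j] \subseteq H$), a non-self edge in $E(L_1,L_2)$ (mapping to an edge incident to $V(P) \subseteq S_j$, placed in $H$ by Step~(1)), a non-self edge in $E(L_k,L_{k+1})$ for $k \geq 2$ (mapping, by the coupling built into the definition of $T_{P,Q,\ell}[\mathbf{p}]$, to an edge of $E_{k-1} \subseteq H$), or a self-edge $(v_j^k,v_j^{k+1})$ (collapsed to a single vertex and discarded). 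Concatenating and then taking a simple sub-path gives the claimed $H$-path of length at most $\ell'$. Your reasoning is complete and matches what the observation tacitly assumes.
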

By Lemma \ref{lem:key-tstar} and Obs. \ref{lem:relations}, we have:
\BCR\label{cor:final-arg}
Let $P,Q$ be such that $\dist_G(P,Q)\leq \ell$. Then, w.h.p, either $\dist_{H}(s,t)=O(k_D)$ or else, for every $k \in \{2,\ldots, \min\{\ell+1, D/2+1\}\}$, there exists a node $u \in V(G)$ such that $\dist_G(u, Q)\leq \ell-k+1$ and $\dist_{H}(s,u)=O(k_D)$. Moreover, the  probabilistic argument uses at most $k$ repetitions out of the $D$ repetitions of Step (2) of the centralized construction.
\ECR


\subsection{Proof of Theorem \ref{thm:dilation}}\label{sec:comp-arg}
Equipped with the tool of shortcut trees and $(i,k)$ walks, we are now ready to provide the dilation argument of the subgraph $H=G[S_j] \cup H_j$. Our goal is to show that $\dist_H(s,t)=\widetilde{O}(k_D)$ for a fixed pair $s,t$ in $S_j$. Let $P'=[s=v_1,\ldots, v_{2d-1}=t]$ be the $s$-$t$ shortest path in $G[S_j]$.
The next key lemma shows that at least half of the path $P'$ can be shortcut in $H$ into a path length of $\widetilde{O}(k_D)$. Since this argument can be applied to any sub-path of $P'$, the final dilation bound is obtained by a recursive application of that lemma. We show:

\BL\label{lem:start-half}
W.h.p., one of the three events must hold w.h.p (i) $\dist_{H}(v_{d+1},v_{2d-1})=O(k_D)$, (ii) $\dist_{H}(v_{1},v_{d})=O(k_D)$, or (iii) $\dist_{H}(v_{1},v_{2d-1})=O(k_D)$. 
\EL
\begin{proof}
The proof is based on having $d+1$ applications of the shortcut trees of Section \ref{sec:shortcut-tree}. See Fig. \ref{fig:dilation-arg} for an illustration. Let $H_1, H_2$ be the edges added to the subgraph in the first (resp., last) $D/2$ applications of Step (2) of the algorithm. We will show that w.h.p. over the randomness of the edges sampled to $H_1$, each of the $d$ applications of the shortcut trees satisfies a certain desired property. Then, conditioned on these properties, we show that the last $(d+1)^{th}$ application satisfies another property w.h.p. over the randomness of the edges sampled to $H_2$. Since there is a dependency here, and each application uses at most $D/2$ sampling steps of Step (2), over all we need $D$ sampling steps. (For the first $d$ applications, we use the same $D/2$ sampling steps, as there is no conditioning between these applications).

Let $P_1=[s=v_1,\ldots, v_d]$ be the first half of the path $P'$, and let $P_2=[t=v_{2d-1}, \ldots, v_{d+1}]$ be the second half of the path, written in a \emph{reverse} manner from $v_{2d-1}$ to $v_{d+1}$. 
We start by making $d$ applications of the shortcut tree constructions where for each $i \in \{1,\ldots, d\}$, we define $Q_i=\{v_i\}$ and the auxiliary graph $G_{P_2,Q_i,D}$, the tree $T_{P_2,Q_i,D}$ and the final graph $T^*_i=T_{P_2,Q_i,D}[\mathbf{p}] \cup E(P_2)$.

By Cor. \ref{cor:final-arg}, it holds that w.h.p. one of the following two events hold for any $i \in \{1,\ldots, d\}$ using at the most $D/2$ repetitions of the edge sampling in Step (2) of the algorithm.
\begin{itemize}
\item{(E1)} There exists a node $v_{i_j} \in V(H)$ such that $\dist_{H}(v_{i_j},v_{2d-1})=(c N/k_D) ^{(D-2)/2}=O(k_D)$ and $v^{\ell'}_{i_j}$ for $\ell'=D/2+1$ is a node in layer $\ell'$ in $T^*_i$. 
\item{(E2)} $\dist_{H}(v_{d+1},v_{2d-1})=O(k_D)$.
\end{itemize}
%
If there exists an index $i \in \{1,\ldots, d\}$ for which event (E2) holds, we are done. Assume from now on that the event (E1) holds w.h.p. for every $i \in \{1,\ldots, d\}$. Let $q_i=v_{i_j}$ and define $Q=\{q_1,\ldots, q_d\}$. Note that the nodes $q_i$ are not necessarily distinct. Since each $q_i$ appears in level $D/2+1$ of the tree $T_{P_2,\{v_i\},D}$, it holds that $\dist_G(v_i,q_i)\leq D/2$. Therefore $\dist(V(P_1), Q)\leq D/2$. We now apply again the shortcut tree construction and define the graphs 
$G_{P_1,Q,D/2}$, $T_{P_1,Q,D/2}$ (as in Sec. \ref{sec:shortcut-tree}). The sampled sub-tree $T_{P_1,Q,D/2}[\mathbf{p}]$ is based on the sampled edge set $E_{D/2+1}, \ldots, E_D$ (i.e., the edges sampled in the last $D/2$ applications of Step (2)). Formally, letting $L_1=V(P_1)$, $L_{D/2+1}=Q$ and $L_{D/2+2}=\{r\}$, then $T_{P_1,Q,D/2}[\mathbf{p}]$ consists of the following edges:
\begin{itemize} 
\item $(E(L_1,L_2) \cup E(L_{D/2+1}, L_{D/2+2})) \cap T_{P_1,Q,D/2}$,
\item self-edges: $\{(v_i^k, v_i^{k+1})\in E(L_k,L_{k+1}) \cap E(T_{P_1,Q,D/2}) ~\mid~ v_i \in L_k, k \in \{2,\ldots,D/2\}\}$,

\item sampled non self-edges: $\{(v_i^k, v_j^{k+1}) \in E(L_k,L_{k+1}) \cap E(T_{P_1,Q,D/2}) ~\mid~ (v_i,v_j) \in E_{D/2+k-1}, k \in \{2,\ldots, D/2\}\}$. 
\end{itemize}  

Let $T^*_1=T_{P_1,Q,D/2}[\mathbf{p}] \cup E(P_1)$. By Cor. \ref{cor:final-arg} it holds that w.h.p. (over the second set of $D/2$ repetitions of the edge-sampling in Step (2)) that one of the following two events holds:
\begin{itemize}
\item{(E3)} $\exists q_{i^*} \in V(H) \cap Q$ such that $\dist_{H}(v_1,q_{i^*})=(c N/k_D)^{(D-2)/2}=O(k_D)$.
\item{(E4)} $\dist_{H}(v_{1},v_{d})=O(k_D)$.
\end{itemize}
If event (E4) holds we are done. Thus, consider the case where (E3) holds. By event (E2), we have the w.h.p. $\dist_{H}(q_i, v_{2d-1})=O(k_D)$ for every $i \in \{1,\ldots, d\}$. By combining with (E3) we have that 
$\dist_H(v_1,v_{2d-1})\leq \dist_H(v_1,q_{i^*})+ \dist_H(q_{i^*},v_{2d-1})=O(k_D)$ as required. The lemma follows.
\end{proof}


\begin{figure}[h!]
\begin{center}
\includegraphics[scale=0.40]{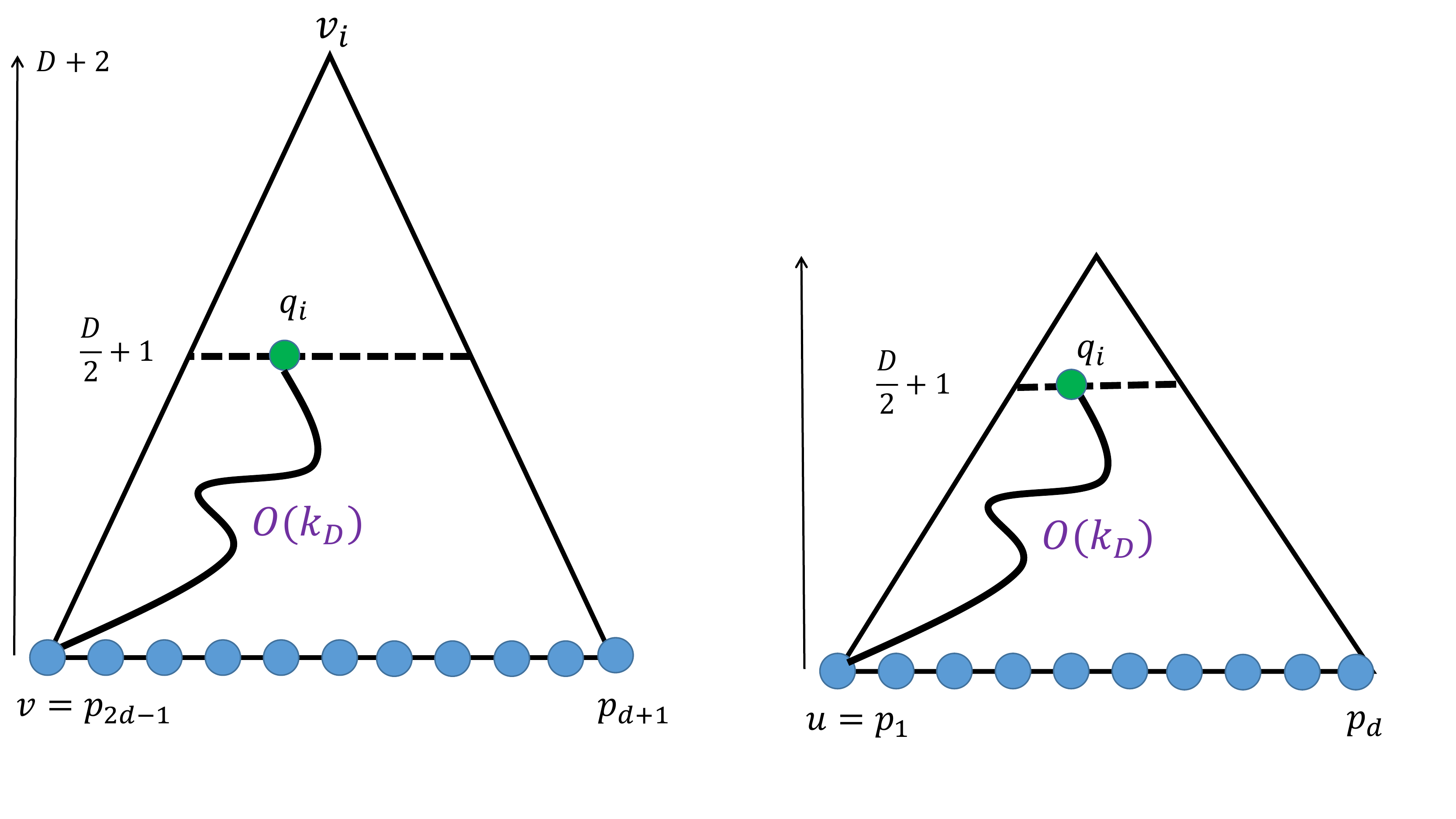}
\caption{\sf Illustration for the dilation argument of Theorem \ref{thm:dilation} for an $u$-$v$ shortest path $P=[s=v_1,\ldots, t=v_{2d-1}]$. The argument applies $d+1$ applications of the shortcut trees scheme. Left: An illustration for the auxiliary graph $T^*_i=T_{P_2,Q_i,D}[\mathbf{p}] \cup E(P_2)$. W.h.p., $T^*_i$ contains either (i) an $v_{2d-1}$-$v_d$ path of length $O(k_D)$ or else (ii) an $v_{2d-1}$-$q_i$ path of length $O(k_D)$ for some node $q_i$ in level $D/2+1$ of the tree $T_{P_2,Q_i,D}$. Right: The second part of the argument applies the shortcut construction for the auxiliary graph $T^*_1=T_{P_1,Q,D}[\mathbf{p}] \cup E(P_1)$ where $Q$ is the collection of $q_i$ nodes defined by the prior $d$ applications, for each $v_i \in P_1$. W.h.p., it then holds that $T^*_1$ contains either  (i) an $v_{1}$-$v_d$ path of length $O(k_D)$ or else (ii) an $v_{1}$-$q_i$ path of length $O(k_D)$ for some node $q_i$. In the latter case, we obtain an $v_1$-$v_{2d-1}$ shortcut path of length $O(k_D)$ that goes through $q_i$.  \label{fig:dilation-arg} 
}
\end{center}
\end{figure}

\begin{proof}[Proof of Theorem \ref{thm:dilation}]
Consider $s,t \in G[S_j]$ and let $P=[s=v_1,\ldots, v_{2d-1}=t]$ be an $s$-$t$ shortest path in $G[S_j]$. By Lemma \ref{lem:start-half}, it holds that it least half of the $s$-$t$ shortest path $P \subseteq G[S_j]$ can be shorten by a path of length $K=O(k_D)$. In the same manner, we apply Lemma \ref{lem:start-half} on any subpath $P' \subseteq P$. Since there are at most $|P|^2$ such paths, by the union bound, the guarantee of Lemma \ref{lem:start-half} holds for any such sub-path $P' \subseteq P$. 

Coming back to our path $P$, by Lemma \ref{lem:start-half} the shortcut argument can be applied recursively on the remaining path $P'$ where $|P'|\leq |P|/2$. Since in each recursive application w.h.p. there is a shortcut in one of the two bisections of the path, overall we obtain an $s$-$t$ path in $H$ of length $(k_D \log n)$ w.h.p.

To handle the case where the diameter is odd, the algorithm is modified as follows. We split each edge $e=(u,v)$ in $G$ into \emph{two edges} by introducing a dummy intermediate node $x_e$ connected (only) to $u$ and $v$. The resulting modified graph $G'$ has now an even diameter $D'=2D$. Note that any path in $G'$ corresponds to a path in $G$ in the following manner: 
The shortcut algorithm is applied on the graph $G'$ where the only modification is that the sampling probability of each edge in $G'$ is set to $\mathbf{p}'=\sqrt{\mathbf{p}}$ (except for the edges chosen in Step $1$ of the algorithm, for each such edge we take the  two-length corresponding path with probability $1$). 
Note that two edges $(u,x_e)$ and $(x_e,v)$ are sampled into the shortcut subgraph $H'_j$ with probability of $(\mathbf{p}')^2=\mathbf{p}$. The final output subgraph $H_j$ contains only edges $(u,v)$ such that both $(u,x_e)$ and $(x_e,v)$ are sampled into $H'_j$. We then apply the argument similarly to the even diameter case. More specifically, we will be working on the graph $G'$ all along. Then, in Lemma \ref{lem:start-half} the set $Q$ defined based on the first $d$ applications of the shortcut tree argument correspond to dummy nodes. Then, the $(d+1)^{th}$ application shows that either there is a short $G'$ path of length $O(k_D)$ between the endpoints $s,t$ of the path, or a shortcut of length $O(k_D)$ between a path endpoint to a mid-point on the path. The reason that the length remains $O(k_D)$ in this construction is that a path from level $1$ to level $D+1$ in the shortcut tree of $G'$ 
contains $D-2$ edges where each edge was chosen with probability $\sqrt{\mathbf{p}}$ (we do not take into account the first two edges as they were chosen with probability $1$ into the shortcut subgraph), and thus the path length is $O(k_D)$ by the application of Lemma 
\ref{lem:length-ik} and Lemma \ref{lem:start-half}. The rest of the argument is almost identical to the even case and thus omitted.
\end{proof}

\section{Applications to Distributed Optimization}\label{sec:app}

\BF[\cite{GhaffariThesis17}]\label{fc:shortcut-MST}
Let $\mathcal{G}$ be a graph family such that for each graph $G \in \mathcal{G}$ and
any partition of $G$ into vertex-disjoint connected graphs $G_1,\ldots,G_N$, one can find an 
$(\congestion,\dilation)$ shortcuts such that $\congestion+\dilation\leq K$ and these shortcuts can be computed in $\widetilde{O}(K)$ rounds.
Then: 
\begin{itemize}
\item{[Theorem 6.1.2]:} there is a randomized distributed MST algorithm that computes an MST in $\widetilde{O}(K)$ rounds, with high probability, in any graph from the family $G$. 
\item{[Theorem 7.6.1]:}  there is a randomized distributed algorithm that computes a $(1+\epsilon)$ approximation of the minimum cut in $\widetilde{O}(K)$ rounds, with high probability, in any graph from the family $G$. 
\end{itemize}
\EF
Corollary \ref{cor:MST} follows by combining Fact \ref{fc:shortcut-MST} with Theorem \ref{thm:main-result}. 
An additional immediate corollary of improved shortcuts is for computing an approximate SSSP. 
Haeupler and Li \cite{HaeuplerL18} provided improved algorithms for several shortest-path problems whose bounds depend on the quality of shortcuts. By plugging the bounds of Theorem \ref{thm:main-result} into Corollaries 2,3 in \cite{HaeuplerL18} we get:
\BCR[Improved Distributed SSSP Tree Algorithms]\label{cor:imp-sssp}
There are randomized algorithms, that, given an $n$-vertex weighted graph with polynomial edge weights of unweighted diameter $D=O(1)$ perform the following tasks:
(1) compute a spanning tree that approximates distances to a given source vertex to within factor $(\log n)^{O(1/\epsilon)})$, in $\widetilde{O}(n^{(D-2)/(2D-2)})\cdot n^{\epsilon})$ rounds for any constant $\epsilon$; and  (2) compute  a spanning tree that approximates distances to a given source vertex within factor $2^{O(\sqrt{\log n})}$,
in $\widetilde{O}(n^{(D-2)/(2D-2)})\cdot 2^{O(\sqrt{\log n})})$ rounds.
\ECR
Finally, Dory and Ghaffari \cite{dory2019improved} recently studied the distributed approximation of minimum weight two-edge connected subgraphs ($2$-EECS). By plugging Theorem \ref{thm:main-result} into Theorem 1.2 of \cite{dory2019improved}, we get:
\BCR[Improved Approximation of $2$-EECS]\label{cor:twoeecs}
There is an algorithm, that, given an $n$-vertex weighted graph of (unweighted) diameter $D=O(1)$, computes an $O(\log n)$-approximation of the weighted 2-ECSS in $\widetilde{O}(n^{(D-2)/(2D-2)})$ rounds, with high probability.
\ECR

\paragraph{Acknowledgments.} We are very grateful to the PODC 2021 reviewers for many insightful comments, and specifically for the reviewer suggesting the improvement of Lemma \ref{lem:length-ik}. 

\newpage
\bibliographystyle{alpha}
\bibliography{ref}

\end{document}